\newcommand{\R}{{\cal R}}
\newcommand{\str}{{\sf sp} }
\newcommand{\be}{\begin{equation}}
\newcommand{\ee}{\end{equation}}
\newcommand{\bee}{\begin{eqnarray}}
\newcommand{\eee}{\end{eqnarray}}
\newcommand{\vv}{{\vec v}}
\newcounter{theorem}
\makeatletter \@addtoreset{theorem}{section}
\newcounter{corollary}
\makeatletter \@addtoreset{corollary}{section}
\newcounter{lemma}
\makeatletter \@addtoreset{lemma}{section}
\newcounter{proposition}
\makeatletter \@addtoreset{proposition}{section}
\newcounter{conjecture}
\makeatletter \@addtoreset{conjecture}{section}
\newcounter{remark}
\makeatletter \@addtoreset{remark}{section}
\newcounter{definition}
\makeatletter \@addtoreset{definition}{section}
\makeatletter \@addtoreset{equation}{section}
\newenvironment{proof}[1][Proof]{\noindent\textsf{#1.\ }}
{\hfill {\small $\square$}}
\begin{document}

\sloppy \title
 {
Connection between the ideals generated by traces and by supertraces
in the superalgebras of
observables of Calogero models
}

\author
 {
 S.E. Konstein%
\thanks{ I.E. Tamm Department of Theoretical Physics,
          P.N. Lebedev Physical Institute, RAS
          119991, Leninsky prosp., 53, Moscow, Russia}
\thanks{E-mail: konstein@lpi.ru}
 ,
 I.V. Tyutin$^*$%
\thanks{E-mail: tyutin@lpi.ru}     }

\date{
}

\maketitle
\thispagestyle{empty}

\begin{abstract}

If $G$ is a finite Coxeter group, then symplectic reflection algebra
$H:=H_{1,\eta}(G)$ has Lie algebra $\mathfrak {sl}_2$ of inner derivations and can be decomposed
under spin: $H=H_0 \oplus H_{1/2} \oplus H_{1} \oplus H_{3/2} \oplus ...$.
We show that if the ideals $\mathcal I_i$ ($i=1,2$) of all the vectors
from the kernel
of degenerate bilinear forms $B_i(x,y):=sp_i(x\cdot y)$,
where $sp_i$ are (super)traces on $H$, do exist, then $\mathcal I_1=\mathcal I_2$
if and only if $\mathcal I_1 \bigcap H_0=\mathcal I_2 \bigcap H_0$.

\end{abstract}


\section{Preliminaries and notation}

Let ${\mathcal A}$ be an associative superalgebra with parity $\pi$.
All expressions of linear algebra are given for homogenous elements only
and are supposed to be extended to inhomogeneous elements via linearity.

\begin{definition}\label{str}
A linear function $str$ on ${\mathcal A}$ is called a {\it supertrace} if
$$str(f \cdot g)=(-1)^{\pi(f)\pi(g)}str(g \cdot f) \ \mbox{ for all } f,g\in {\mathcal A}.$$
\end{definition}

\begin{definition}\label{tr}
A linear function $tr$ on ${\mathcal A}$ is called a {\it trace} if
$$tr(f \cdot g)=tr(g \cdot f) \ \mbox{ for all } f,g\in {\mathcal A}.$$
\end{definition}

We will use the notation $\str$ and the term "(super)trace" to denote both cases,
traces and supertraces, simultaneously.

\section{The superalgebra of observables}

Let $V={\mathbb  R}^N$ be endowed with a positive definite symmetric
bilinear form $(\cdot,\cdot)$.
For any nonzero $\vv \in V$, define the {\it
reflections} $r_\vv$ as follows:
\be\label{ref}
r_\vv: \ \ \
{\vec x} \mapsto {\vec x} -2 \frac {({\vec x},\,\vv)} {(\vv,\,\vv)} \vv \qquad
\mbox{ for any }{\vec x} \in V.
\ee

A finite set of non-zero vectors $\R\subset V$ is said to be a {\it root system}
and any vector $\vv\in \R$ is called a \emph{root}
if the following conditions hold:

i) $\R$ is ${r}_{\vec w}$-invariant for any $\vec w \in \R$,

ii) if $\vv_1,\vv_2\in \R$ are proportional to each other, then either $\vv_1=\vv_2$ or $\vv_1=-\vv_2$.

The Coxeter group $G\subset O(N, {\mathbb  R})\subset End(V)$ generated
by all reflections ${r}_\vv$ with $\vv \in \R$ is finite.

We do not apply any conditions on the scalar products of the roots because we
want to consider
both crystallographic and non-crystallographic root systems, e.g., $I_2(n)$ (see Theorem \ref{th2}).

Let $\eta$ be a complex-valued $G$-invariant function on $\R$, i.e.,
$\eta(\vv)=\eta({\vec w})$ if $r_\vv$ and $r_{\vec w}$ belong to
one conjugacy class of $G$.

We consider here
the Symplectic Reflection (Super)algebra
over complex numbers
(see \cite{sra}) $H:=H_{1,\eta}(G)$
and call it the
\emph{superalgebra of observables of Calogero model based on root system $\R$}.
\footnote
{This algebra
has a faithful
representation via Dunkl differential-difference operators $D_i$,
see \cite{Dunkl}, acting on the space of $G$-invariant smooth functions on $V$,
namely $\hat a^{\alpha}_i =\frac 1
{\sqrt{2}} (x_i + (-1)^\alpha D_i)$, see \cite{BHV,Poly}.
The Hamiltonian of the Calogero model based on the root system
\cite{Cal1, Cal2, Cal3, OP} is the operator $\hat T^{01}$  defined in (\ref{sl2})
(see \cite{BHV} ).
The wave functions are obtained in this model
via the standard Fock procedure with the Fock vacuum $ |0\rangle$ such that
$\hat a_i^0 |0\rangle$=0 for all $i$ by acting on $ |0\rangle$ with $G$-invariant polynomials
of the~$\hat a^1_i$.
}

This algebra consists of noncommuting polynomials in $2N$ indeterminates
$a^\alpha_i$, where $\alpha=0,1$ and $i=1,\, ... ,\, N$, with coefficients
in $\mathbb C[G]$ satisfying the relations (see \cite{sra} Eq. (1.15))%
\footnote{
The sign and coefficient of the sum in the rhs of Eq. (\ref{comaa})
is chosen for obtaining the Calogero model in the form
\cite{BHV}, Eq. (1), Eq. (5),  Eq. (9), Eq. (10) when $\R$ is of type
$A_{N-1}$.
}
\be\label{comaa}
[a^{\alpha}_i, a^{\beta}_j] = \varepsilon^{\alpha\beta}
\left(\delta_{ij}+
\sum_{\vv\in\R} \eta(\vv) \frac {v_i v_j}{(\vv,\,\vv)}r_{\vv}\right),
\ee
and
\be\label{comav}
r_{\vv} a^{\alpha}_i = \sum_{j=1}^N \left(\delta_{ij} - 2
\frac {v_i v_j}{(\vv,\,\vv)}\right)a^{\alpha}_j  r_{\vv}.
\ee

Here $\varepsilon^{\alpha\beta}$ is the antisymmetric tensor such that $\varepsilon^{01}=1$,
and $v_i$ ($i=1,..., N$) are the coordinates of the vector $\vv$.
The commutation relations (\ref{comaa}), (\ref{comav}) suggest
to define the {\it parity} $\pi$ by setting:
\be
\pi (a^{\alpha}_i)=1
\ \mbox{ for any }\alpha,\ i;
\qquad \pi( r_{\vv})=0 \ \mbox{ for any } \vv \in \R.
\ee
and we can consider the algebra $H$ as a superalgebra as well.


\section{$\mathfrak {sl}_2$}

Observe an important property of the superalgebra
$H$: The Lie (super)algebra of its inner
derivations
contains the Lie subalgebra $\mathfrak{sl}_2$ generated by operators
\be
D^{\alpha\beta}: \ \ \ f\mapsto D^{\alpha\beta}f=[T^{\alpha\beta}, \, f]
\ee
where $\alpha,\beta=0,1$, and $f\in H$, and polynomials $T^{\alpha\beta}$ are defined
as follows:
\be \label {sl2}
T^{\alpha\beta}:= \frac 1 2 \sum _{i=1}^N
\left( a^{\alpha}_i a^{\beta}_i + a^{\beta}_i a^{\alpha}_i \right )\, .
\ee
These operators satisfy the following relations:
\be \label{csl2}
[D^{\alpha\beta}, D^{\gamma\delta}]= \epsilon^{\alpha\gamma}
D^{\beta\delta} +\epsilon^{\alpha\delta} D^{\beta\gamma} +
\epsilon^{\beta\gamma} D^{\alpha\delta} +\epsilon^{\beta\delta}
D^{\alpha\gamma}\,,
\ee
since
\[
[T^{\alpha\beta}, T^{\gamma\delta}]= \epsilon^{\alpha\gamma}
T^{\beta\delta} +\epsilon^{\alpha\delta} T^{\beta\gamma} +
\epsilon^{\beta\gamma} T^{\alpha\delta} +\epsilon^{\beta\delta}
T^{\alpha\gamma}.
\]
It follows from Eq. (\ref{csl2}) that
the operators $D^{00}$, $D^{11}$ and $D^{01}=D^{10}$
constitute an $\mathfrak{sl}_2$-triple:
$$[D^{01},\, D^{11}] = 2 D^{11},\qquad
[D^{01},\, D^{00}] = - 2 D^{00},\qquad
[D^{11},\, D^{00}] =  - 4 D^{01}.$$

The polynomials $T^{\alpha\beta}$ commute with ${\mathbb C}[G]$, i.e.,
$[T^{\alpha\beta},\,r_\vv]=0$,
and act on the $a^{\alpha}_i$ as on vectors
of the irreducible 2-dimensional
$\mathfrak{sl}_2$-modules:
\be\label{sl2vec}
D^{\alpha\beta}a^{\gamma}_i=\left[T^{\alpha\beta},\,a^{\gamma}_i\right]=
\varepsilon^{\alpha\gamma}a^{\beta}_i +
\varepsilon^{\beta\gamma}a^{\alpha}_i, \quad \mbox{ where } i=1,\,\dots\,, N.
\ee

We will denote this $\mathfrak{sl}_2$ thus realized by the symbol $SL2$.

The subalgebra
\be
H_0:=\{f\in H \mid D^{\alpha\beta}f=0 \mbox { for any } \alpha,\, \beta \}\subset H
\ee
is called the \emph{subalgebra of singlets}.

Introduce also the subspaces $H_s:=\oplus_{i_s=1}^\infty H^{i_s}_s$,
which is the direct sum of all irreducible $SL2$-modules $H^{i_s}_s$  of
 spin $s$, for $s=0,\, 1/2,\, 1,\,...$. It is clear that $H_0$ is
 the defined above subalgebra of singlets.

The (super)algebra $H$ can be decomposed in the following way
\[
H=H_0 \oplus H_{rest},  \text{\ \ where \ \ }
H_{rest}:= H_{1/2} \oplus H_1 \oplus H_{3/2} \oplus\dots.
\]
Then each element $f\in H$
can be represented in the form $f=f_0+f_{rest}$, where $f_0 \in H_0$ and $f_{rest}\in H_{rest}$.

Note, that since $SL2$ is generated by inner derivations and $T^{\alpha\beta}$ are even elements,
each two-sided ideal $\mathcal I\subset H$ can be decomposed in an analogous way:
$\mathcal I=\mathcal I_0\oplus \mathcal I_{1/2}\oplus... $.

Since $T^{\alpha\beta}$ are even elements of the superalgebra $H$,
we have
$\str(D^{\alpha\beta}f)=0$ for any (super)trace $\str$ on $H$, and hence
the following proposition takes place%
\footnote{This elementary fact is known for a long time, see, eg, \cite{KV}.}:
\begin{proposition}\label{f0}
$\str(f)=\str(f_0)$ for any $f\in H$ and any (super)trace $\str$ on $H$.
\end{proposition}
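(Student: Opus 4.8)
The plan is to exploit two facts: that every $T^{\alpha\beta}$ is an even element, so $D^{\alpha\beta}f = [T^{\alpha\beta},f]$ always has vanishing (super)trace, and that the decomposition $H = H_0 \oplus H_{rest}$ is precisely the decomposition into $SL2$-singlets and the span of nontrivial irreducible $SL2$-submodules. The key observation is that $H_{rest}$ lies in the span of the images of the derivations $D^{\alpha\beta}$, or more precisely that each nontrivial irreducible $SL2$-module sits inside $\sum_{\alpha\beta}\mathrm{Im}\,D^{\alpha\beta}$; once that is established, applying $\str$ kills $f_{rest}$ and leaves $\str(f)=\str(f_0)$.

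First I would record the elementary identity $\str(D^{\alpha\beta}f) = \str([T^{\alpha\beta},f]) = 0$ for every $f \in H$ and every (super)trace $\str$. For a trace this is immediate from Definition~\ref{tr}; for a supertrace it follows from Definition~\ref{str} together with $\pi(T^{\alpha\beta})=0$, since $\str(T^{\alpha\beta}f) = (-1)^{0\cdot\pi(f)}\str(fT^{\alpha\beta}) = \str(fT^{\alpha\beta})$. So $\str$ annihilates the subspace $W := \mathrm{span}\{D^{\alpha\beta}f : \alpha,\beta \in \{0,1\},\ f\in H\}$.

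Next I would show $H_{rest} \subseteq W$. It suffices to treat a single irreducible $SL2$-submodule $M = H_s^{i_s} \subseteq H$ of spin $s \ge 1/2$. On $M$ the operators $D^{\alpha\beta}$ act as the standard $\mathfrak{sl}_2$ generators (via the commutation relations (\ref{csl2})), and for an irreducible module of nonzero spin the raising and lowering operators $D^{11}$, $D^{00}$ together with the Cartan element $D^{01}$ span $M$ as their total image: concretely, for any nonzero weight vector one recovers it as $D^{01}$ applied to itself up to a nonzero scalar, and the extreme weight vectors are hit by $D^{11}$ or $D^{00}$ from the adjacent weight space. Hence $M \subseteq W$, and summing over all such $M$ gives $H_{rest}\subseteq W$. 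Combining with the previous step, $\str(f_{rest}) = 0$, so $\str(f) = \str(f_0) + \str(f_{rest}) = \str(f_0)$.

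The one point needing care — the main (though minor) obstacle — is the claim that a nontrivial irreducible $SL2$-module is contained in the span of the images of its own $\mathfrak{sl}_2$-action; this is where "nontrivial" (spin $\ge 1/2$) is essential, since on the trivial module all $D^{\alpha\beta}$ act by zero. For spin $s \ge 1/2$ one checks it weight-space by weight-space: the weight-$m$ space with $m \ne 0$ lies in $\mathrm{Im}\,D^{01}$, and the two extreme weight spaces $m = \pm 2s$ lie in $\mathrm{Im}\,D^{11}$ and $\mathrm{Im}\,D^{00}$ respectively, using that the $\mathfrak{sl}_2$-action does not annihilate any proper step in an irreducible module. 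Everything else is bookkeeping with the direct sum decomposition.
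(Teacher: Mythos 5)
Your proof is correct and follows essentially the same route as the paper's own: the paper likewise observes that $\str\circ D^{\alpha\beta}=0$ because $T^{\alpha\beta}$ is even, and that for $s\ne 0$ the elements $D^{\alpha\beta}f$ with $f\in H^{i_s}_s$ span $H^{i_s}_s$, whence $\str$ vanishes on $H_{rest}$. One small bookkeeping slip in your weight-space check: the space that escapes the $\mathrm{Im}\,D^{01}$ argument is the \emph{zero}-weight space of an integer-spin module (the extreme weights $\pm 2s$ are nonzero and hence already covered), but it is reached by $D^{11}$ from the weight $-2$ space by the very non-annihilation fact you cite, so the argument closes.
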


\begin{proof}
If $s\ne 0$, then the elements of the form $D^{\alpha\beta} f $ , where
$\alpha,\, \beta =0,\,1$, and $f \in H^{i_s}_s$, $f\ne0$, span the irreducible $SL2$-module $H^{i_s}_s$.
This implies $\str f =0$ for any (super)trace on $H$ and any $f\in H_{rest}$.
\end{proof}


\section{The (super)traces on $H$}\label{trace}

It is shown in \cite{KV,KTSRA,KTroot} that the algebra $H$ has a multitude
of independent (super)traces. For the list of dimensions of the spaces of the
(super)traces
on $H_{1,\eta}(M)$ for all finite Coxeter groups $M$, see \cite{stek}.
In particular,  there is an $m$-dimensional space of
traces and an $(m+1)$-dimensional space of supertraces on $H_{1,\eta}(I_2(2m+1))$.

Every (super)trace $\str (\cdot)$ on any associative (super)algebra ${\mathcal A}$ generates
the following bilinear form on ${\mathcal A}$:
\bee\label{bf2}
B_{\str }(f,g)=\str (f\cdot g) \mbox{ for any } f,g\in {\mathcal A}.
\eee

It is obvious that if such a bilinear form $B_{\str }$ is degenerate,
then
the
kernel of this form (i.e., the set of all vectors $f \in {\mathcal A}$ such that $B_{\str}(f, g)=0$ for any
$g\in {\mathcal A}$)
is the two-sided
ideal ${\mathcal I^{\str}}\subset {\mathcal A}$.

The ideals of this sort are found, for example,
in \cite[Theorem 9.1]{2m+1}  (generalizing the results of \cite{V1, V2}  and \cite{K2} for the two-
and three-particle Calogero models).

Theorem 9.1 from \cite{2m+1} may be shortened to the following theorem:
\vskip 2mm
\begin{theorem}\label{th2}
Let $ m\in \mathbb Z$, where $m\geqslant 1$
and $n=2m+1$.
Then

\emph{1)} The associative algebra $H_{1,\eta}(I_2(n))$
has  nonzero  traces $tr_\eta$ such that the symmetric invariant bilinear form
$B_{tr_\eta}(x,y)=tr_\eta(x\cdot y)$  is degenerate if and only if
$\eta= \frac z n$, where $z\in \mathbb Z \setminus n\mathbb Z$.
All such traces are proportional to each other.

\emph{2)} The associative superalgebra $H_{1,\eta}(I_2(n))$
has  nonzero supertraces $str_\eta$ such that the supersymmetric invariant bilinear form
$B_{str_\eta}(x,y)=str_\eta(x\cdot y)$  is degenerate if
$\eta= \frac z n$, where $z\in \mathbb Z \setminus n\mathbb Z$.
All such supertraces are proportional to each other.

\emph{3)} The associative superalgebra $H_{1,\eta}(I_2(n))$
has  nonzero supertraces $str_{\eta}$ such that the supersymmetric invariant bilinear form
$B_{str_{\eta}}(x,y)=str_{\eta}(x\cdot y)$  is degenerate if
$\eta= z + \frac 1 2$, where $z\in \mathbb Z$.
All such supetraces are proportional to each other.

\emph{4)} For all other values of $\eta$, all nonzero traces and supertraces are nondegenerate.
\end{theorem}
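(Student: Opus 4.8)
The plan is to convert degeneracy of the invariant form into a statement about two‑sided ideals of $H:=H_{1,\eta}(I_2(n))$ and then to pin those ideals down by combining the general structure theory with a \emph{finite} rank computation that is made possible by Proposition~\ref{f0} and the $SL2$‑decomposition. First I would record the elementary reformulation: for any nonzero (super)trace $\str$ the kernel $\mathcal I^{\str}=\ker B_{\str}$ is a two‑sided ideal (by the observation following (\ref{bf2})), and it is a \emph{proper} ideal, since $\str\neq 0$ forces $\str(g)=B_{\str}(1,g)\neq 0$ for some $g$, so $1\notin\mathcal I^{\str}$. Hence $B_{\str_\eta}$ is degenerate precisely when $\mathcal I^{\str_\eta}\neq 0$. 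Since, as recalled in Section~\ref{trace}, the traces on $H$ form an $m$‑dimensional space and the supertraces an $(m+1)$‑dimensional space, the theorem is equivalent to two things: (a) determining for which $\eta$ some nonzero element of these spaces has nonzero radical, and (b) showing that the set of such ``degenerate directions'' is at most a line for every $\eta$.

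Next I would make the form explicit. Choose bases of the trace space and the supertrace space indexed by the conjugacy classes of $I_2(n)$ — for $n=2m+1$ these are $\{1\}$, the $m$ rotation classes, and the single reflection class — and write a general (super)trace as $\sum_c\lambda_c\,\str^{(c)}$. By Proposition~\ref{f0} one has $\str(f\cdot g)=\str((f\cdot g)_0)$, so the whole bilinear form is reconstructed from the singlet parts of products; taking a spanning family $\{f_a\}$ of $H$ adapted to $H=H_0\oplus H_{rest}$, the numbers $B_{\str}(f_a,f_b)$ become explicit rational functions of $\eta$ and the $\lambda_c$, and degeneracy turns into the vanishing, for a suitable choice of $(\lambda_c)$, of the relevant finitely‑determined ``correlation determinants''.

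Then I would locate the special $\eta$. For all but countably many $\eta$ the algebra $H$ is simple — by the general theory of symplectic reflection algebras, and in this rank‑two dihedral case directly from the PBW basis — so $\mathcal I^{\str}=0$ for every nonzero (super)trace; this already establishes part 4) on the generic locus. The special values are exactly those at which a standard (lowest‑weight) module of $H$ built on an irreducible $G$‑module acquires a singular vector, equivalently those at which $H$ has a nonzero proper ideal; working this condition out with the Clebsch--Gordan rules for the dihedral group $I_2(n)$ produces the arithmetic alternative $\eta\in\tfrac1n\mathbb Z\setminus\mathbb Z$, seen by traces and supertraces alike, and — \emph{only in the superalgebra setting} — the extra family $\eta\in\tfrac12+\mathbb Z$. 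At each such $\eta$ one exhibits the resulting ideal explicitly and checks that exactly a one‑dimensional family of (super)traces annihilates it; uniqueness of that ideal then gives the proportionality assertions, and together with the generic case it yields the ``only if'' in part 1) and the sharpness of part 4). This accounts for parts 1), 2) and 3).

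The main obstacle I expect is the parity asymmetry responsible for part 3): one must show that at $\eta=z+\tfrac12$ the distinguished ideal is the radical of some supertrace but of no trace — i.e. that its elements pair nontrivially under every ordinary trace yet trivially under a supertrace. This is a sign‑bookkeeping problem in which the spin grading $H=H_0\oplus H_{rest}$ must be refined by the parity $\pi$, and the vanishing of $\str(f\cdot g)$ hinges on how the two gradings interact on that ideal. Carrying this through — and simultaneously ruling out any further degeneracy loci, which is what makes part 4) an exact dichotomy — is exactly where the detailed $I_2(n)$ computation of \cite{2m+1} becomes unavoidable.
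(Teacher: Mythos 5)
This theorem is not proved in the paper at all: it is explicitly introduced as a shortened restatement of Theorem~9.1 of \cite{2m+1}, and the present text's only ``proof'' is that citation. So the honest comparison is between your outline and the external computation it ultimately leans on --- and as a standalone argument your proposal has a genuine gap, which you in fact concede in your last sentence. Every substantive claim of the theorem is deferred rather than established: the identification of the degeneracy locus as exactly $\eta=z/n$ with $z\in\mathbb Z\setminus n\mathbb Z$ (plus $\eta\in\tfrac12+\mathbb Z$ for supertraces only), the proportionality of all degenerate (super)traces, and the completeness of part~4 are each described as the output of a computation (``produces the arithmetic alternative'', ``one exhibits the resulting ideal explicitly and checks'') that is never performed. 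Generic simplicity of $H_{1,\eta}$ disposes only of a countable complement and cannot give part~4, which asserts nondegeneracy at \emph{every} non-special $\eta$; and uniqueness of a proper ideal at a special $\eta$ does not by itself yield the one-dimensionality of the space of degenerate (super)traces, since several independent (super)traces could vanish on the same ideal --- that step needs the explicit $m$- and $(m+1)$-dimensional parametrizations of the (super)trace spaces and the rank computation you only gesture at.

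Two smaller inaccuracies are worth flagging. First, a basis of the (super)trace space ``indexed by the conjugacy classes'' of $I_2(2m+1)$ would have $m+2$ elements, whereas the paper records the dimensions as $m$ (traces) and $m+1$ (supertraces); the correct parametrization is by the subset of conjugacy classes singled out in \cite{KTSRA,stek}, not by all of them. Second, your reduction via Proposition \ref{f0} to ``correlation determinants'' on singlet parts is a sensible organizing device (and is close in spirit to how \cite{2m+1} actually proceeds, by evaluating the form on the subalgebra of singlets), but it does not by itself make the problem finite: $H_0$ is infinite-dimensional, and controlling the rank of $B_{\str}$ there is precisely the content of the cited Theorem~9.1. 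In short, the proposal is a reasonable roadmap to the reference, not a proof of the statement.
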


Theorem \ref{th2} implies that if $z\in \mathbb Z \setminus n\mathbb Z$, then
there exists the degenerate trace $tr_z$ generating the ideal $\mathcal I^{tr_z}$ consisting of the kernel
of the degenerate form $B_{tr_z}(f,g)=tr_z(f\cdot g)$, and simultaneously
the degenerate supertrace $str_z$ generating the ideal $\mathcal I^{str_z}$ consisting of the kernel
of the degenerate form $B_{str_z}(f,g)=str_z(f\cdot g)$.

A question arises: is it true that $\mathcal I^{tr_z}=\mathcal I^{str_z}$?

Answer to this and other similar questions
can be considerably simplified by considering only the singlet parts of these ideals.

The following theorem justifies this method:

\vskip 2mm
\begin{theorem}\label{th3}
Let $\str_1$ and $\str_2$ be degenerate (super)traces on
$H$.
They generate the
two-sided ideals $\mathcal I_1$ and $\mathcal I_2$ consisting of the kernels
 of bilinear forms $B_1 (f,g)=\str_1(f\cdot g)$ and
$B_2 (f,g)=\str_2(f\cdot g)$, respectively.

Then
$\mathcal I_{1}=\mathcal I_{2}$ if and only if
$\mathcal I_{1} \bigcap H_0 = \mathcal I_{2} \bigcap H_0$.
\end{theorem}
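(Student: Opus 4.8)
The plan is to collapse the whole question onto the singlet subspace $H_0$, using Proposition~\ref{f0} together with the remark that a degenerate (super)trace annihilates the ideal it generates. The "only if" part is trivial, so the content is the converse: assuming $\mathcal I_1\cap H_0=\mathcal I_2\cap H_0$ I want $\mathcal I_1=\mathcal I_2$, and by symmetry in the indices it is enough to prove $\mathcal I_1\subseteq\mathcal I_2$.

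First I would record two elementary facts about an ideal $\mathcal I=\mathcal I^{\str}$ of the kind occurring in the statement. (1) Being two-sided, $\mathcal I$ is stable under the inner derivations $f\mapsto[T^{\alpha\beta},f]$, hence it is an $SL2$-submodule and decomposes compatibly with $H=H_0\oplus H_{rest}$, with singlet part $\mathcal I_0=\mathcal I\cap H_0$; in particular $f\in\mathcal I$, $g\in H$ imply $fg\in\mathcal I$ and hence $(fg)_0\in\mathcal I\cap H_0$. (2) Since $H$ is unital, setting $g=1$ in $B_{\str}(f,g)=\str(fg)$ gives $\str(f)=0$ for every $f\in\mathcal I$; so $\str$ vanishes on $\mathcal I$, in particular on $\mathcal I\cap H_0$.

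The decisive step is then Proposition~\ref{f0} written as $\str(fg)=\str\bigl((fg)_0\bigr)$. Take $f\in\mathcal I_1$ and an arbitrary $g\in H$: fact (1) gives $(fg)_0\in\mathcal I_1\cap H_0=\mathcal I_2\cap H_0\subseteq\mathcal I_2$, and then fact (2) applied to $\str_2$ gives $\str_2(fg)=\str_2\bigl((fg)_0\bigr)=0$. As $g$ was arbitrary this means $f\in\mathcal I_2$, hence $\mathcal I_1\subseteq\mathcal I_2$; exchanging the indices completes the argument. In fact this reasoning proves the sharper assertion that the ideal is recovered from its singlet part, $\mathcal I^{\str}=\{f\in H:(fg)_0\in\mathcal I^{\str}\cap H_0\ \text{for all }g\in H\}$, which makes the theorem immediate.

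I do not expect a genuine obstacle. The only points that need a sentence of justification are that $\ker B_{\str}$ is a two-sided, $SL2$-invariant subspace (so that it really splits off a singlet part) and that $\str$ kills this kernel; both follow at once from unitality and the (super)trace identity, using that $T^{\alpha\beta}$ are even, so that $\str([T^{\alpha\beta},f])=0$ and $\str(f\cdot1)=\str(f)$ hold verbatim for traces and for supertraces. That uniformity is exactly what lets the same proof cover the trace and the supertrace simultaneously, which is the case of interest since the motivating application (Theorem~\ref{th2}) pairs a trace $tr_z$ with a supertrace $str_z$.
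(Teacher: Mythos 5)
Your proof is correct and is essentially identical to the paper's: both reduce the question to the singlet part via Proposition~\ref{f0}, using that $(fg)_0\in\mathcal I_1\cap H_0=\mathcal I_2\cap H_0$ forces $\str_2(fg)=\str_2((fg)_0)=0$ for all $g$, hence $f\in\mathcal I_2$. The only difference is that you spell out explicitly the two auxiliary facts (the ideal is $SL2$-stable and the (super)trace vanishes on its own kernel ideal) that the paper uses implicitly.
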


\begin{proof}
It suffices to prove that
if
$\mathcal I_{1} \bigcap H_0 = \mathcal I_{2} \bigcap H_0$,
then
$\mathcal I_{1}=\mathcal I_{2}$.

Consider any non-zero element $f\in \mathcal I_1$.
For any $g\in H$, we have $\str_1(f\cdot g)=0$,
$f\cdot g\in \mathcal I_1$ and $(f\cdot g)_0\in \mathcal I_1$.
So
$(f\cdot g)_0\in \mathcal I_1 \bigcap H_0$.
Due to hypotheses of this Theorem,
$(f\cdot g)_0\in \mathcal I_2 \bigcap H_0$,
and hence ${\str_2((f\cdot g)_0)=0}$.
 Proposition \ref{f0} gives
$\str_2(f\cdot g)=\str_2((f\cdot g)_0)$
which implies
$\str_2(f\cdot g)=0$.

Therefore, $f\in \mathcal I_2$.
\end{proof}


\section*{Acknowledgments}
The authors (S.K. and I.T.) are grateful to Russian Fund for Basic Research
(grant No.~${\text{17-02-00317}}$)
for partial support of this work.



\begin{thebibliography}{00}

\bibitem{BHV} L.~Brink, H.~Hansson and M.A.~Vasiliev,
              "Explicit solution to the N-body Calogero problem",
             Phys.~Lett. {\bf B286} (1992) 109-111.

\bibitem{Cal1} F.~Calogero,
``Solution of a three-body problem in one dimension",
J.~Math.~Phys. {\bf 10} (1969) 2191-2196;

\bibitem{Cal2} F.~Calogero,
``Ground state of a one dimensional N-body problem",
J.~Math.~Phys. {\bf 10} (1969) 2197-2200;

\bibitem{Cal3} F.~Calogero,
``Solution of the One-Dimensional N-Body Problems with Quadratic and/or Inversely Quadratic Pair Potentials",
J.~Math.~Phys. {\bf 12} (1971) 419,
             https://doi.org/10.1063/1.1665604.


\bibitem{Dunkl} C.F.~Dunkl,
"Differential-difference operators associated to reflection groups",
Trans. Amer. Math. Soc., {\bf 311} (1) (1989) 167-183,
                          doi:10.2307/2001022.


\bibitem{sra}
P.~Etingof and V.~Ginqzburg, Symplectic reflection algebras,
Calogero--Moser space, and
deformed Harish--Chandra homomorphism,
Inv. Math. 147 (2002), 243--348.



\bibitem{K2} S.E.~Konstein, ``3-particle Calogero Model: Supertraces and Ideals on the Algebra of Observables",
              Teor.~Mat.~Fiz., {\bf 116} (1998) 122; arXiv:hep-th/9803213.


\bibitem{stek} S.E.~Konstein and R.~Stekolshchik,
               \lq\lq Klein operator and the Number of Traces and Supertraces
              on the Superalgebra of Observables of Rational Calogero Model
              based on the Root System",
              Journal of Nonlinear Mathematical Physics,
              Vol. 20, No. 2 (June 2013), 295-308;

\bibitem{KTroot} S.E.~Konstein and I.V.~Tyutin,
``Traces on the Algebra of Observables of the Rational Calogero Model Based on the
Root System", Journal of Nonlinear Mathematical Physics, 20:2, (2013)
271-294; arXiv:1211.6600


\bibitem{KTSRA} S.E.~Konstein and I.V.~Tyutin,
``The number of independent traces and supertraces on symplectic
reflection algebras", Journal of Nonlinear Mathematical Physics, 21:3, (2014) 308-335;
arXiv:1308.3190

\bibitem{2m+1} S.E.~Konstein and I.V.~Tyutin,
``Ideals generated by traces or by supertraces in the symplectic
reflection algebra $H_{1,\nu}(I_2(2m+1))$'',
Journal of Nonlinear Mathematical Physics, 24:3, 405-425 (2017),
DOI:10.1080/14029251.2017.1341702; arXiv:1612.00536



\bibitem{KV} S.E.~Konstein and M.A.~Vasiliev,
``Supertraces on the Algebras of Observables of the Rational Calogero Model with Harmonic Potential",
              J.~Math.~Phys. {\bf 37} (1996) 2872.

\bibitem{OP} M.A.~Olshanetsky and A.M.~Perelomov,
             ``Quantum integrable systems related to lie algebras",
              Phys. Rep., {\bf 94}, Issue 6, (1983) 313-404,
              DOI:  10.1016/0370-1573(83)90018-2.

\bibitem{Poly} A.~Polychronakos,
               ``Exchange operator formalism for integrable systems of particles",
               Phys.~Rev.~Lett. {\bf 69} (1992) 703-705.

\bibitem{V1} M.A.~Vasiliev,
``Quantization on sphere and high-spin superalgebras ",
JETP Letters, {\bf 50} (1989) 377-379;

\bibitem{V2} M.A.~Vasiliev,
``Higher spin algebras and quantization on the sphere and hyperboloid",
              Int. J. Mod. Phys. {\bf A6} No. 07 (1991) 1115-1135,
https://doi.org/10.1142/S0217751X91000605.

\end{thebibliography}
\end{document}